\spnewtheorem{definition}{Definition}{\bfseries}{\rmfamily}
\newtheorem{myprob}[definition]{Problem}
\renewcommand\bibsection%
    \pretocmd{\NAT@citexnum}{\@ifnum{\NAT@ctype>\z@}{\let\NAT@hyper@\relax}{}}{}{}
\theoremstyle{plain}
\newtheorem{obs}[theorem]{Observation}
\newtheorem{Claim}[theorem]{Claim}
\def\T{{\cal T}}
\begin{document}
\title{Polynomial Time Algorithms for Bichromatic Problems}


\author{Sayan Bandyapadhyay\inst{1}\thanks{sayan-bandyapadhyay@uiowa.edu}
        \and Aritra Banik\inst{2}\thanks{aritrabanik@gmail.com
}
        }
\institute{
  Computer Science, University of Iowa, Iowa City, USA
  \and  Department of Computer Science \& Engineering, Indian Institute of Technology, Jodhpur, India
  }
  
\authorrunning{S.\,Bandyapadhyay, A.\,Banik} 

\maketitle

\vspace{-0.4cm}
\begin{abstract}
In this article, we consider a collection of geometric problems involving points colored by two colors (red and blue), referred to as bichromatic problems. The motivation behind studying these problems is two fold; (i) these problems appear naturally and frequently in the fields like Machine learning, Data mining, and so on, and (ii) we are interested in extending the algorithms and techniques for single point set (monochromatic) problems to bichromatic case. For all the problems considered in this paper, we design low polynomial time exact algorithms. These algorithms are based on novel techniques which might be of independent interest.
\end{abstract} 

\section{Introduction}\label{sec:intro}

In discrete and computational geometry one of the most important classes of problems are those involving points colored by two colors (red and blue), referred to as bichromatic problems. These problems have vast applications in the fields of Machine learning, Data mining, Computer graphics, and so on. For example, one natural problem in learning and clustering is separability problem \cite{Cristianini,Duda}, where given a red and a blue set of points, the goal is to separate as many red (desirable) points as possible from the blue (non-desirable) points using geometric objects. Another motivation to study bichromatic problems is to extend the algorithms and techniques in discrete and computational geometry for single point set problems to bichromatic case (see the survey by Kaneko and Kano \cite{kaneko} and some more recent works \cite{arkincccg,arkin2015choice,arkin15,BiniazBMS16,BiniazMNS15}).

In this paper we consider two bichromatic problems that arise naturally in practice. Throughout the paper as coloring we refer to a function that maps points to the range $\{red,blue\}$. In the first problem, we are given two finite disjoint sets of points $R$ and $B$ in the plane, colored by red and blue, respectively. Denote the respective cardinality of $R$ and $B$ by $n$ and $m$. Also let $\T=R\cup B$. In the second problem, we are given a finite collection $Q$ of pairs of points in the plane and we need to find a coloring of the points using red and blue such that certain optimality criterion is satisfied. For the sake of simplicity of exposition we assume that all the points are in general position. Assuming this, we formally define the two problems mentioned before. 
\paragraph{Maximum Red Rectangle Problem.} A rectangle (of arbitrary orientation) is called \textit{red} if it does not contain any blue points in its interior\footnote{the red rectangle may contain blue points on its boundary}. The \textit{size} of a red rectangle is defined as the number of red points contained in it. The \textit{maximum red rectangle problem} (MRR) is to find a red rectangle of maximum size. Such a rectangle will be referred to as a \textit{maximum red rectangle}. We note that the version where the rectangles are constrained to be axes parallel is a special case of MRR. 

Liu and Nediak \cite{LiuN03} considered the axes parallel version of MRR and designed an algorithm that runs in $O(n^2\log n+nm+m\log m)$ time and $O(n)$ space. Backer and Keil \cite{BackerK09} improved this time bound to $O((n+m){\log}^3 (n+m))$ using a divide-and-conquer approach due to Aggarwal and Suri \cite{AggarwalS87}. However, their algorithm runs in $O(n\log n)$ space. For axes parallel squares, they designed an $O((n+m){\log} (n+m))$ time algorithm.
As far as we are concerned the general version of MRR was not considered before. However, two related problems involving rectangles of arbitrary orientation have been studied before. The first one is the largest empty rectangle problem (LER) where given a point set $P$, the goal is to find a rectangle of the maximum area which does not contain any point of $P$ in its interior \cite{AggarwalS87,ChaudhuriND03,ChazelleDL86,Naamad84,Orlowski90}. This problem can be solved in $O(|P|^3)$ time with $O(|P|^2)$ space \cite{ChaudhuriND03}. The second problem is a bichromatic problem, where the goal is to find the rectangle that contains all the red points, the minimum number of blue points and has the largest area. This problem can be solved in $O(m^3+n\log n)$ time \cite{ArmaseluD16}.

Several variants of MRR have been studied in the past. Aronov and Har-Peled \cite{AronovH08} considered the problem of finding a maximum red disk and gave a $(1-\epsilon)$-factor approximation algorithm with $O((n+m){\epsilon}^{-2}{\log}^2 (n+m))$ expected running time. Eckstein~\emph{et al.} \cite{EcksteinHLNS02} considered a variant of MRR for axes parallel hyperrectangles in high dimensions. They showed that, if the dimension of the space is not fixed, the problem is NP-hard. However, they presented an $O(n^{2d+1})$ time algorithm, for any fixed dimension $d \geq 3$. Later, Backer and Keil \cite{BackerK10} have significantly improved this time bound to $O(n^d{\log}^{d-2} n)$.  
Bitner~\emph{et al.} \cite{BitnerCD10} have studied the problem of computing all circles that contain the red points and as few blue points as possible in its interior. 
In \cite{CortesDPSUV09} Cortes~\emph{et al.} have considered the following problem: find a largest subset of $R\cup B$ which can be enclosed by the union of two not necessarily disjoint, axes-aligned rectangles $R_1$ and $R_2$ such that $R_1$ (resp. $R_2$) contains only red (resp. blue) points. 

\paragraph{Maximum Coloring Problem.} We are given a collection $Q$=$\{(a_1,b_1),\ldots, (a_n,b_n)\}$ of pairs of points. Let $P=\cup_{i=1}^n \{a_i,b_i\}$. A coloring of the points in $P$ is \textit{valid} if for each pair of points in $Q$, exactly one point is colored by blue and the other point is colored by red. Now consider any valid coloring $C$. For any halfplane $h$, we denote the set of red points and blue points in $h$ by $h_r(C)$ and $h_b(C)$, respectively. Let $\mathcal{H}({C})$ be the set of all halfplanes $h$ such that $|h_b(C)|=0$. Define $\eta({C})=\max_{h\in \mathcal{H}({C})} |h_r(C)|$. The \textit{Maximum Coloring Problem} (MaxCol) is to find a valid coloring ${C}$ that maximizes $\eta({C})$.

Problems related to coloring of points have been studied in literature. Even~\emph{et al.} \cite{EvenLRS03} defined Conflict-free coloring where given a set $X$ of points and a set of geometric regions, the goal is to color the points with minimum number of colors so that for any region $r$, at least one of the points of $X$ that lie in $r$ has a unique color. This problem have further been studied in offline \cite{CheilarisGRS14,katz2012conflict} and online \cite{Bar-NoyCS08,ChenFKLMMPSSWW07} settings.

\paragraph{Our Results.} We give exact algorithms for both of the problems considered in this paper. In Section \ref{sec:mrr}, we design an algorithm for MRR that runs in $O(g(n,m)\log (m+n)+n^2)$ time with $O(n^2+m^2)$ space, where $g(n,m)\in O(m^2(n+m))$ and $g(n,m)\in \Omega(m^2+mn)$.
To solve the problem in general case, we need to solve an interesting subproblem which can be of independent interest. The nontriviality of this problem arises mainly due to the arbitrary orientations of the rectangles.


In Section \ref{sec:mcp}, we show that MaxCol can be solved in $O (n^{\frac{4}{3}+\epsilon}\log n)$ time. In particular, we show a linear time reduction from MaxCol to a problem and design an algorithm for the latter problem with the desired time complexity.

\section{Maximum Red Rectangle Problem}\label{sec:mrr}

Before stepping into the general case let us consider the axes parallel version. We note that the ideas mentioned here for the axes parallel case have been noted before in \cite{BackerK09}.

\subsection{The Axes Parallel Case}
Note that the number of red rectangles can be infinite. But for the sake of computing a maximum red rectangle we can focus on the following set. Consider the set $S$ of red rectangles such that each side of any such rectangle contains a point of $B$ or is unbounded. We note that the candidate set for the axes parallel version of Largest Empty Rectangle problem (LER) on $B$ is exactly $S$. Using this connection we will use several results from the literature of LER. Namaad~\emph{et al.} \cite{ChazelleDL86} proved that $|S|=O(m^2)$ and in expected case $|S|=O(m\log m)$. Orlowski \cite{Orlowski90} has designed an algorithm that computes the set $S$ in $O(|S|)$ time. The algorithm requires two sorted orderings of $B$, one with respect to $x$-coordinates and the other with respect to $y$-coordinates. 

To compute a maximum red rectangle we use Orlowski's algorithm. In each iteration when the algorithm computes a rectangle, we make a query for the number of red points inside the rectangle. Lastly, we return the rectangle that contains the maximum number of red points. Now, using $O(n\log n)$ preprocessing time and space, one can create a data structure that can handle orthogonal rectangular query in $O(\log n)$ time. Hence the axes parallel version of MPP can be solved in $O(|S|\log n+n\log n+m\log m)$ time with $O(n\log n+m)$ space required.
%

\subsection{The General Case}

For a point $p$, we denote its $x$ and $y$ coordinate by $x_p$ and $y_p$, respectively. A rectangle is said to be \textit{anchored} by a point set $P$ if it contains the points of $P$ on its boundary. For any red rectangle $T$, denote $R\cap T$ by $R_T$ and its boundary by $\delta(T)$. 

Like in the axes parallel case, in general case also the number of red rectangles can be infinite. However, we will define a finite subset of those rectangles such that the subset contains a maximum red rectangle. We note that this is a key step based on which we design our algorithm. The approach is not that subtle like the one for axes parallel case. 

Consider the set of red rectangles $C$ with the following two properties. For any rectangle $T\in C$, (i) at least one side of $T$ contains two points $p,q$ such that $p\in B$ and $q\in R\cup B$; and (ii) each of the other sides of $T$ either contains a point of $B$ or is unbounded. Now we have the following lemma.

\begin{restatable}{lemma}{cardofc}\label{obs:cardofc}
 $|C|=O(m^2(n+m))$ and $|C|=\Omega(m^2+mn)$.
\end{restatable}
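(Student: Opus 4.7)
The plan is to prove the two bounds separately, with the upper bound being the main work. I would charge each $T \in C$ to the pair $(p,q)$ guaranteed by condition~(i) and then bound the number of rectangles compatible with each such pair. Since $p \in B$ and $q \in R \cup B \setminus \{p\}$, there are $O(m(n+m))$ unordered such pairs, and each pair fixes both the orientation of $T$ and the supporting line $\ell$ of the distinguished side. After rotating coordinates so that $\ell$ is the $x$-axis, and paying a factor of two so that $T$ may be assumed to lie above $\ell$, I would write $T = [\alpha,\beta] \times [0,h]$ with $\alpha \le \min(x_p,x_q)$ and $\beta \ge \max(x_p,x_q)$; condition~(ii) then forces each of the top, left, and right sides of $T$ to be anchored by a point of $B$ or to be unbounded.

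The central claim of the upper bound is that only $O(m)$ rectangles are consistent with any fixed pair $(p,q)$. The intuition is that pinning the specific blue point $p$ on the bottom side collapses the natural two-parameter family of maximal empty rectangles above $\ell$ into a one-parameter family: sweeping a single free anchor forces the remaining ones through the emptiness and maximality constraints, and the boundary cases where the top, left, or right side is at infinity contribute only an additive $O(m)$ term each. Multiplying then gives $|C| \le O(m(n+m)) \cdot O(m) = O(m^2(n+m))$. Establishing this $O(m)$-per-pair count is the delicate step: a naive enumeration over three independent anchors gives $O(m^3)$, and even the classical $O(m^2)$ bound on maximal empty rectangles in a fixed orientation falls one factor of $m$ short, so the saving must come from exploiting the pinning of the specific blue point $p$ through a staircase-style argument.

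For the lower bound $\Omega(m^2+mn)$ I would exhibit two separate configurations. For the $\Omega(m^2)$ term, I would use a standard configuration of $m$ blue points for which the axes parallel LER problem on $B$ alone admits $\Omega(m^2)$ maximal empty rectangles; each such rectangle can be realized so that one of its sides contains two points of $B$, placing it in $C$. For the $\Omega(mn)$ term, I would place $B$ in convex position (for instance on a small circular arc) and scatter the red points of $R$ on the complementary side so that, for every pair $(b,r) \in B \times R$, all other blue points lie on one side of the line through $b$ and $r$; the unbounded rectangle on the empty side, whose finite side contains $b$ and $r$ and whose other three sides are at infinity, belongs to $C$, and distinct pairs produce distinct rectangles.
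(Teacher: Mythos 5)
There are two genuine gaps. First, your upper bound rests entirely on the claim that each pair $(p,q)$ is compatible with only $O(m)$ rectangles of $C$, and you never establish it: you call it ``the delicate step'' and offer only the intuition that maximality collapses a two-parameter family to a one-parameter one. But the rectangles of $C$ are \emph{not} required to be maximal --- the definition only asks that each remaining side contain a blue point or be unbounded --- so the appeal to maximal empty rectangles does not apply as stated, and the staircase-style argument you gesture at is precisely the missing content. The paper avoids this per-pair counting altogether: it splits $C$ into rectangles unbounded on three sides (determined by the pair $(p,q)$, hence $O(m(n+m))$ of them) and rectangles bounded on at least two sides, and for the latter it charges each rectangle to a triple $(p,q,r)$ with $p,r\in B$, $q\in R\cup B$, where $p,q$ lie on one side and $r$ on an opposite (or consecutive) side, arguing that such a triple determines the rectangle uniquely (if two distinct blue-empty rectangles shared the same two anchored parallel sides, the left or right anchor of one would lie strictly inside the other). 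This injection gives $O(m^2(n+m))$ directly, and in fact it is the cleanest way to prove the very $O(m)$-per-pair claim you wanted; as written, your upper bound is an outline, not a proof.

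Second, the lower bound. The paper's statement $|C|=\Omega(m^2+mn)$ is a per-instance bound --- for \emph{every} input, each pair $(p,q)$ with $p\in B$, $q\in R\cup B$ anchors at least one rectangle of $C$ (grow a rectangle from the segment $pq$ until each side meets a blue point or becomes unbounded), and by general position each rectangle is charged $O(1)$ times --- and this per-instance form is what is later used to absorb the $O((m^2+mn)\log(m+n))$ event-sorting cost into $O(|C|\log(m+n))$. You instead prove only a worst-case existence statement, which is weaker than what the lemma is used for. Moreover, your $\Omega(mn)$ construction is unrealizable: for a fixed red point $r$, the line through $r$ and a blue point $b$ has all other blue points on one side only when it is one of the two tangent lines from $r$ to the convex hull of $B$, so at most two blue points per red point can have the property you require; requiring it for \emph{every} pair $(b,r)\in B\times R$ is impossible once $m\geq 3$, and the construction yields only $O(n)$ such three-side-unbounded rectangles rather than $\Omega(mn)$. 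The $\Omega(m^2)$ part is also left vague (the standard LER configurations anchor maximal rectangles with one point per side, so ``realizing'' a side with two blue points needs an actual argument), but it is the $\Omega(mn)$ construction and the per-instance issue that are the substantive problems.
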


\begin{proof}
 $C$ contains rectangles of two categories; (i) one which are unbounded from three sides, (ii) the others which are bounded from at least two sides. As each rectangle of first category can be identified by a point of $B$ and a point of $R\cup B$, the number of such rectangles is $O(m(n+m))$. Now consider a rectangle $T$ of second category. At first suppose that $T$ is bounded from two opposite sides. Then there are three points $p,q$ and $r$ such that $p,r \in B$, $q\in R\cup B$, and $p,q$ are contained on the side of $T$ opposite to the one containing $r$. Note that given the information that $p,q$ and $r$ are contained on the two opposite sides of a rectangle, we can uniquely form the rectangle. Thus there are $O(m^2(n+m))$ rectangles of second category which are bounded from two opposite sides. Similarly, one can show that there are $O(m^2(n+m))$ rectangles of second category which are bounded from two consecutive sides. Thus $|C|=O(m^2(n+m))$. Also $|C|=\Omega(m^2+mn)$, as for any two points $p,q$ such that $p\in B$ and $q\in R\cup B$ there is at least one rectangle in $C$. 
\end{proof}

The next lemma shows that $C$ is indeed a good candidate set for finding a maximum red rectangle. 

\begin{restatable}{lemma}{candi}\label{candidate}
 The set $C$ as defined above contains a maximum red rectangle.
\end{restatable}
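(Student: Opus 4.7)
The plan is to take an arbitrary maximum red rectangle $T^*$ and, via a two-stage shrink-and-rotate argument, deform it into a rectangle in $C$ that retains the same (maximum) red count. The two stages handle the two defining conditions of $C$ in order.

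First, I would push each side of $T^*$ outward, translating it parallel to itself away from the interior as far as possible without introducing a blue point into the interior. Each side either comes to rest on a blue point (which now lies on that side's boundary), or the side can be translated to infinity without ever hitting a blue point, in which case it becomes unbounded. The red count is preserved: no red point leaves (we never shrink), and no red point enters the interior (else $T^*$ would not be maximum). Call the resulting maximum red rectangle $T'$. By construction it satisfies condition (ii).

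Second, if some side of $T'$ already carries two points of $R \cup B$ with at least one blue, then $T' \in C$ and we are done. Otherwise, by (ii), each bounded side of $T'$ carries exactly one blue point and no other boundary points. I would then rotate $T'$ continuously, re-sliding each side at every infinitesimal step (as in stage one) to preserve the anchoring on bounded sides and the unboundedness on unbounded sides. The direction of rotation is chosen so that no blue point immediately penetrates the interior; such a direction must exist, for otherwise blue points already touch the boundary in a way that yields (i) in the limit. As the angle varies I would monitor two combinatorial events: (a) an external blue point touches a side whose original blue anchor is still present, giving two blue points on that side; and (b) an internal red point leaves through a side, which, being bounded, already carries a blue anchor by (ii), giving a blue-plus-red pair on that side. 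A red point cannot enter the interior at any time during rotation, since this would contradict the maximality of $T^*$. Hence one of events (a) or (b) must occur first, and at that instant the rotated rectangle belongs to $C$ with the same maximum red count.

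The main obstacle I anticipate is the rigorous treatment of unbounded sides during the rotation, together with the bookkeeping of which blue anchors attach to which sides as the orientation changes. In particular, an unbounded side can transition to being bounded when the rotation brings a fresh blue point into its associated half-plane, and I must argue that the event-based analysis still terminates at a configuration satisfying (i) even after such transitions. A secondary subtle point is verifying that the rotation direction can genuinely be chosen: this relies on general position and on the observation that simultaneous obstruction in both directions already forces two boundary points to lie on a common side, which itself gives (i).
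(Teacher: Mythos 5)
Your overall strategy matches the paper's: first enlarge the maximum red rectangle into a maximal one each of whose sides is blue-anchored or unbounded (the paper phrases this as containment in a rectangle $T'$ of the family $U$ of inclusion-maximal red rectangles, with $R_{T'}=R_T$), and then rotate while keeping each blue anchor on its side until some side acquires a second boundary point of $R\cup B$, which is exactly condition (i). One of your anticipated obstacles is in fact a non-issue: you do not need to choose a rotation direction so that no blue point ``penetrates the interior,'' because a blue point can only enter the interior by first touching a side, and that touching is precisely your good stopping event (a). The paper simply rotates anticlockwise and halts at the first event.

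The genuine gap is in your termination claim that ``one of events (a) or (b) must occur first.'' A third event can occur before either of them, and the paper treats it explicitly as its case (ii): a blue anchor can slide along its side and reach a corner of the rectangle before any new point of $R\cup B$ touches a side. At that instant the rotation as you describe it cannot be continued (pushing further would either detach the anchor from its side, destroying property (ii), or force it into the interior, destroying redness), so your event list alone does not show that the process ends at a rectangle of $C$. The paper's resolution is that the cornered blue point then lies on the adjacent side as well, which carries its own blue anchor, so that side contains two points one of which is blue, and condition (i) holds there. Incorporating this corner event (together with the anchor/unbounded-side bookkeeping you already flagged) closes the gap and makes your argument essentially identical to the paper's proof.
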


\begin{figure*}
  \centering
  \includegraphics[width=42mm]
    {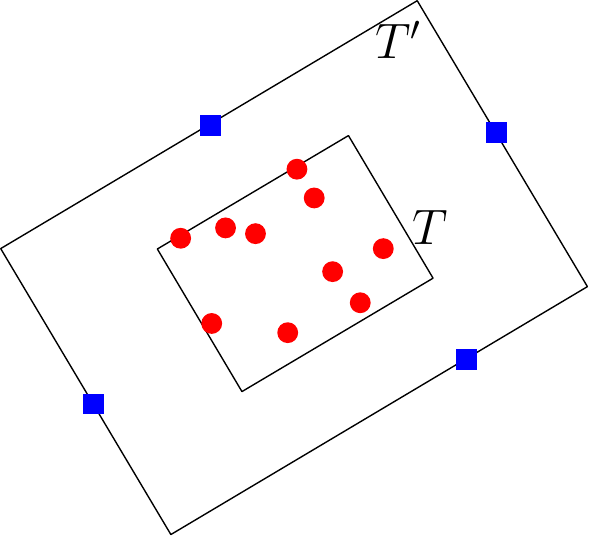}\\
  \caption{A maximum red rectangle $T$ and a rectangle $T'$ in $U$ containing it. Red (resp. Blue) points are shown by disks (resp. squares).}
  \label{fig:setU}
\end{figure*}

\begin{proof}
 Consider any maximum red rectangle $T$. Also consider the set of red rectangles $U$ such that for any rectangle $T_1 \in U$, no red rectangle can properly enclose $T_1$. It is easy to note that for any rectangle in $U$, each of its side either contains a point of $B$ or is unbounded. Indeed, any such rectangle is bounded from at least one side, or it contains the blue points inside it which is not possible. As $T$ is a red rectangle, by definition of $U$, $T$ must be contained inside a rectangle of $U$, say $T'$ (see Figure \ref{fig:setU}). Also $R_{T'}$ is equal to $R_T$. If not, then as $T'$ contains $T$, $R_T \subset R_{T'}$, which violates the assumption that $T$ is a maximum red rectangle. Thus $T'$ is a maximum red rectangle and if it is in $C$ we are done. Thus assume that $T' \notin C$.
 
 By definition of $C$, $T'$ contains exactly one blue point and no red point on each of its bounded sides. We rotate $T'$ anticlockwise so that any blue point on any of its bounded sides always remains on that side (see Figure \ref{fig:rectrotate}(a)). Note that if we keep on rotating $T'$ in this fashion, then two cases are possible; (i) a side of $T'$ touches a point of $R_{T'}$ or a point of $B$ which was not on $\delta(T')$ before rotation (see Figure \ref{fig:rectrotate}(b)), and (ii) a blue point on $\delta(T')$ becomes one of the corner points of $T'$ (see Figure \ref{fig:rectrotate}(c)). In both of the cases the rectangle $T'$ contains two points $p,q$ on one of its sides such that $p\in B$ and $q\in R\cup B$, and each of the other sides of $T$ either contains a point of $B$ or is unbounded. Thus $T'$ is in $C$. As $R_{T'}$ is equal to $R_T$, $T'$ is a maximum red rectangle which completes the proof. 
%
\end{proof}

\begin{figure*}[h]
 \centering
  \begin{minipage}[c]{0.33\textwidth}
  \centering
  \includegraphics[width=35mm] 
    {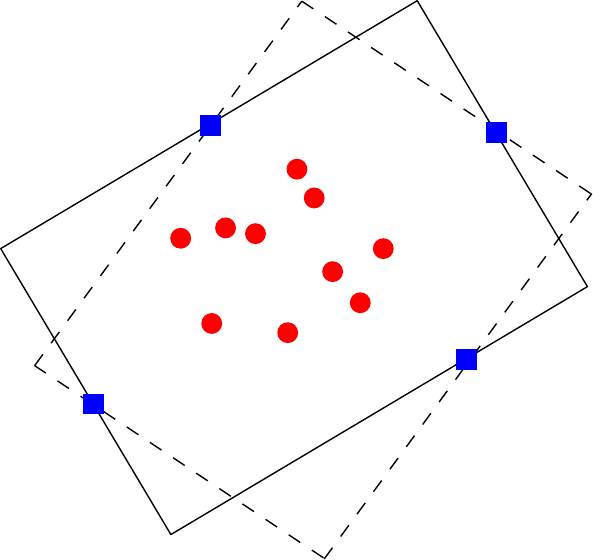}\\
    {\small (a)}\\
    \end{minipage}%
  \begin{minipage}[c]{0.33\textwidth}
  \centering
  \includegraphics[width=35mm]
    {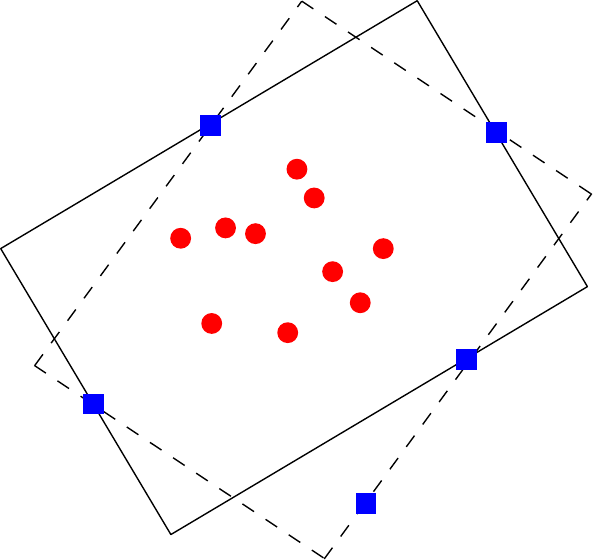}\\
    {\small (b)}\\
    \end{minipage}%
  \begin{minipage}[c]{0.33\textwidth}
  \centering
  \includegraphics[width=35mm]
    {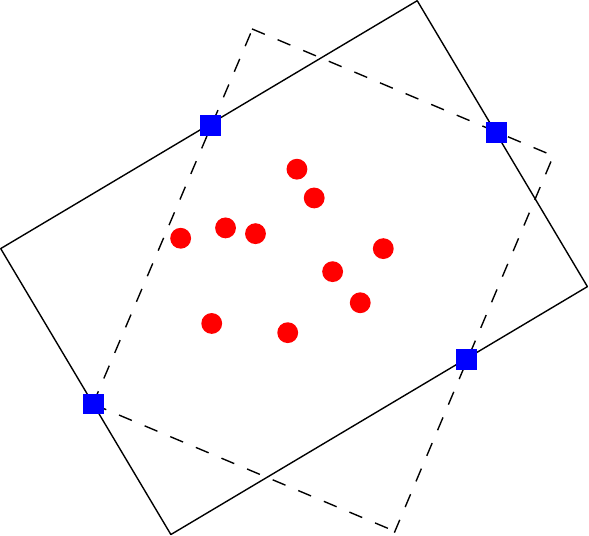}\\
    {\small (c)}\\
    \end{minipage}%
  \caption{The rotated rectangle is shown by dashed sides in all cases. (a) rotation of rectangle by keeping the points fixed, (b) the rectangle touches a point of $B$ which was not on its boundary before rotation, (c) a blue point on a side of the rectangle becomes its corner point during rotation.}
  \label{fig:rectrotate}
\end{figure*}

We compute all the rectangles of $C$ and return one that contains the maximum number of red points. 
Given two points $p,q$ such that $p\in B$ and $q\in R\cup B$, we design a subroutine to compute the rectangles of $C$ such that each of them contains $p$ and $q$ on a single side. 

\subsubsection{The Subroutine}\label{subrtn}
We are given two points $p,q$ such that $p\in B$ and $q\in R\cup B$. We would like to compute the set of rectangles $C_{pq}$ anchored by $p$ and $q$. Without loss of generality, suppose the line through $p$ and $q$ is on the $x$-axis. 
We show how to compute the subset $C'$ of rectangles of $C_{pq}$ whose interior are lying above the $x$ axis. By symmetry, using a similar approach we can compute the rectangles of $C_{pq}$ whose interior are lying below the $x$ axis. We consider all the blue points strictly above the $x$ axis, and we denote this set by $B_1$. 
\begin{figure*}
  \centering
  \includegraphics[width=90mm]
    {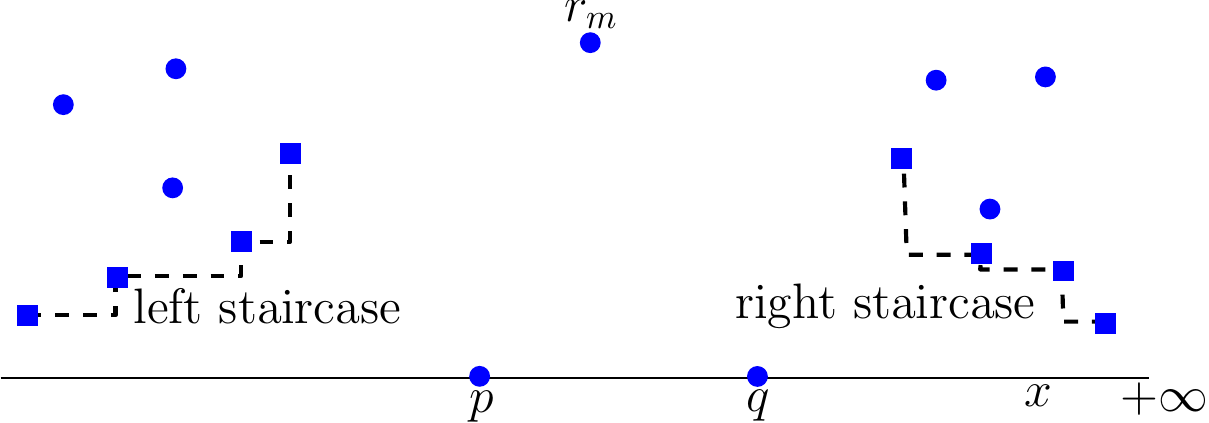}\\
  \caption{Staircase points are shown by squares.}
  \label{fig:staircase}
\end{figure*}

Now consider the points of $B_1$ whose $x$ coordinates are between $x_p$ and $x_q$ (if any). Let $r_m$ be a point having minimum $y$ coordinate among those points. 
%
Let $B'$ be the subset of points of $B_1$ having $y$ coordinate less than $y_{r_m}$. 
%
Define the \textit{left staircase} $B_l$ to be the subset of $B'$, such that for any point $r_2 \in B'$, $r_2$ belongs to $B_l$ if $x_{r_2} < x_p$ and $\nexists r_1 \in B'$ such that $x_{r_2} < x_{r_1} < x_p$ and $y_{r_2} \geq y_{r_1}$ 
%
%
(see Figure \ref{fig:staircase}). Similarly, we define the \textit{right staircase} $B_r$ as follows. For any $r_2$ of $B'$, $r_2 \in B_r$ if $x_q < x_{r_2}$ and $\nexists r_1 \in B'$ such that $x_q < x_{r_1} < x_{r_2}$ and $y_{r_1} \leq y_{r_2}$ (see Figure \ref{fig:staircase}). Also let $B_a= B_l\cup B_r\cup \{r_m\}$. Now we have the following observation.

\begin{obs}
 For any rectangle $T\in C'$, $T$ contains $p,q$ in one of its sides and each of the other sides of $T$ either contains a point of $B_a$ or is unbounded.
\end{obs}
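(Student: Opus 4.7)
My plan is to fix an arbitrary $T \in C'$, work in the rotated frame where the line through $p$ and $q$ is the $x$-axis, and exhibit a point of $B_a$ on each bounded side of $T$ other than the one carrying $p$ and $q$. In this frame $T$ becomes axis-aligned with its bottom side on the $x$-axis, interior strictly above, top at some height $h$, and left/right sides at abscissae $x_l \le x_p$ and $x_r \ge x_q$. Because $T \in C$, every such side is either unbounded or contains at least one blue point, and the general-position hypothesis ensures at most one blue point per side (no two blue points share a coordinate in the rotated frame).

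For the top side I would first argue that $r_m$, whenever it exists, must satisfy $y_{r_m} \ge h$: otherwise $r_m$ lies in $[x_l, x_r] \times (0, h)$, the interior of $T$, contradicting redness. The blue witness $b^{\ast}$ on the top then splits into two cases. If $x_p \le x_{b^{\ast}} \le x_q$, the minimality in the definition of $r_m$ together with general position forces $b^{\ast} = r_m$. Otherwise $b^{\ast}$ lies outside that vertical slab, say to the left (the right case is symmetric and yields $B_r$); then $y_{b^{\ast}} = h \le y_{r_m}$ combined with general position gives $b^{\ast} \in B'$, and I would confirm membership in $B_l$ by supposing a witness $r_1 \in B'$ with $x_{b^{\ast}} < x_{r_1} < x_p$ and $y_{r_1} \le h$ and observing that $r_1$ would then be forced into the interior of $T$.

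The left and right sides succumb to the same interior-trapping idea: if the left side is bounded by a blue point $b^{\ast\ast}$ at $x = x_l$, then general position excludes $x_l = x_p$, so $x_l < x_p$; the fact that $b^{\ast\ast}$ is outside the interior of $T$ forces $b^{\ast\ast} \in B'$, and any violator to $b^{\ast\ast} \in B_l$ would, by the staircase condition, sit in the open rectangle $(x_l, x_p) \times (0, h) \subseteq \operatorname{int}(T)$, again contradicting redness. The right side uses $B_r$ identically. The only spot I expect to handle gingerly is the boundary scenario where a candidate violator threatens to lie on a side of $T$ rather than strictly inside it; this is exactly the coordinate coincidence ruled out by general position, so the argument closes cleanly.
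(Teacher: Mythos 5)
The paper states this Observation without any proof, treating it as immediate from the definitions of $r_m$, $B'$, $B_l$, $B_r$; your argument is the natural reconstruction of that missing justification, and its core step is correct: any blue witness on a bounded non-bottom side must itself be $r_m$ or a staircase point, because a violator of the staircase condition would be trapped in $\operatorname{int}(T)=(x_l,x_r)\times(0,h)$, contradicting that $T$ is red. Your top-side analysis ($h\le y_{r_m}$ when $r_m$ exists, so a witness inside the vertical slab must equal $r_m$, while a witness outside the slab lies in $B'$ and passes the staircase test) is likewise sound, and you correctly flag that the only delicate points are coordinate coincidences in the rotated frame, which the paper's blanket general-position assumption is implicitly taken to exclude.

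One small caveat: your assertion that ``general position excludes $x_l=x_p$'' tacitly assumes the blue witness on the left side is a point of $B_1$, i.e.\ distinct from $p$. The definition of $C$ as written also admits rectangles whose left (or right) side is supported only by $p$ (or by a blue $q$) sitting at a corner, e.g.\ $x_l=x_p$ with no other blue point on that side; such a rectangle belongs to $C'$, yet that side is bounded and contains no point of $B_a$, and this configuration is perfectly generic, so general position does not dispose of it. This is really a looseness in the paper's definition rather than a defect of your approach, and it is harmless for the algorithm (sliding that side outward until it first meets a point of $B_1$ or becomes unbounded yields a rectangle of $C'$ containing the original with at least as many red points, and that rectangle is enumerated), but to make your proof airtight you should either exclude this corner case by convention or handle it with that sliding argument.
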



Orlowski \cite{Orlowski90} designed a linear time algorithm for finding the set of rectangles that has one side on $x$-axis and each of the other sides either contains a point of a staircase or is unbounded. 
Our subroutine at first computes the rectangle anchored by $p,q$ and $r_m$ (if any). Then it uses Orlowski's algorithm to compute the remaining rectangles of $C'$.
It also computes the number of red points inside each rectangle it scans by making a query and returns the rectangle that contains the maximum number of red points. The following theorem is due to Goswami~\emph{et al.} \cite{GoswamiDN04}. 

\begin{theorem}\label{th:goswami}
 \cite{GoswamiDN04} For any set of $n$ points, using $O(n^2)$ preprocessing time and space, one can create a data structure that can handle rectangular (of any arbitrary orientation) query in $O(\log n)$ time.
\end{theorem}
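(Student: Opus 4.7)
The plan is to reduce arbitrary-orientation rectangular range counting to a family of axis-aligned range queries parameterized by the orientation $\theta$ of the query rectangle. In a coordinate frame rotated by $-\theta$, any rectangle of orientation $\theta$ becomes axis-aligned, so the problem collapses to a standard 2D orthogonal range count over the points' rotated projections. The difficulty is that $\theta$ is a continuous parameter, but as $\theta$ varies the combinatorial structure of the sorted orders of the projections onto the two rotated axes changes only at $O(n^2)$ critical angles---one for each pair of points, namely the angle at which their two projections coincide. Between consecutive critical angles the sorted orders of both rotated coordinates are fixed, so it suffices to answer queries against $O(n^2)$ discrete combinatorial states rather than a continuum.

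I would sort the $O(n^2)$ critical angles and sweep $\theta$ through them in order. At each event, two adjacent elements in one of the sorted projection orders swap; everything else is unchanged. Throughout the sweep I would maintain a two-level range tree augmented with fractional cascading (to guarantee $O(\log n)$ query time for the axis-aligned range problem), and make the structure persistent so that every intermediate version remains addressable. A query for an arbitrary orientation $\theta$ then first locates the relevant version by binary search on the sorted sequence of event angles and then performs the four-sided range query on that version, each step taking $O(\log n)$ time.

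The main obstacle is controlling the total preprocessing space: a naive snapshot per event would cost $\Omega(n^2 \log n)$, exceeding the $O(n^2)$ budget. Overcoming this requires exploiting the locality of each event---a single transposition of neighbors---so that only an $O(1)$-sized fragment of the data structure truly changes between consecutive versions; using path copying or fat-node persistence together with an amortized analysis, one hopes to absorb the aggregate update cost into $O(n^2)$. Keeping the fractional-cascading pointers consistent under these local rearrangements (without forcing a cascade of updates across the secondary lists) is the most delicate step, and is where the bulk of the technical work in a full proof would lie.
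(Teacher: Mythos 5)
This statement is not proved in the paper at all: it is quoted verbatim as a known result of Goswami, Das and Nandy \cite{GoswamiDN04}, so there is no in-paper argument to compare against. Judged on its own merits, your sweep-plus-persistence plan has a genuine gap exactly where you flag it, and the gap is fatal to the stated bounds rather than a technicality. An event is an adjacent transposition in one of the two projection orders, but in a two-level range tree such a transposition does \emph{not} touch only an $O(1)$-sized fragment: when two points swap in the primary ($x$-projection) order, every secondary structure at the nodes strictly between their leaves and their lowest common ancestor gains one point and loses the other, i.e.\ $\Theta(\log n)$ secondary lists change, and the fractional-cascading pointers threading those lists change as well; a swap in the $y$-projection order likewise perturbs the $O(\log n)$ secondary lists containing the two points. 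Hence even with ideal path-copying persistence, each of the $\Theta(n^2)$ events costs $\Omega(\log n)$ (more realistically $\Theta(\log^2 n)$ once cascading pointers are rebuilt) in both time and space, so the construction lands at $\Omega(n^2\log n)$ space and preprocessing time, not the $O(n^2)$ claimed in the theorem. There is no amortization to rescue this: the per-event change is genuinely $\Omega(\log n)$ in the worst case, and partially persistent fractional cascading is itself a nontrivial open-ended piece of machinery that you would still have to supply.

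For contrast, the cited construction works in the dual and never maintains rotating sorted orders: one builds the arrangement of the $n$ dual lines ($O(n^2)$ faces, edges and vertices, each storing $O(1)$ precomputed counts), which supports counting the points of the primal set inside a query halfplane or wedge in $O(\log n)$ time by point location; an arbitrarily oriented query rectangle is then decomposed into a constant number of wedges/triangles, so four-sided rotated queries come for free from triangular counting. That route gets the $O(n^2)$ time-and-space bound precisely because the arrangement has quadratic complexity with constant data per cell, sidestepping persistence entirely. If you want to salvage your primal approach, you would need a per-orientation structure whose persistent update under an adjacent transposition is truly $O(1)$ words, which a range tree with fractional cascading is not.
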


We note, that given two sorted lists of blue points in non-decreasing order of $x$ and $y$ coordinates, respectively, by Theorem \ref{th:goswami} the subroutine runs in $O(|C'|\log n)$ time.  
%
Hence we have the following lemma.

\begin{lemma}\label{lem:sbrtn}
 Using $O(n^2+m\log m)$ preprocessing time and $O(n^2+m)$ space a maximum red rectangle of $C_{pq}$ can be computed in $O(|C_{pq}|\log n)$ time.
\end{lemma}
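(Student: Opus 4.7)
The plan is to prove the lemma by describing the data structures built once in preprocessing and then the per-invocation procedure of the subroutine, after which the stated bounds follow by straightforward accounting. Correctness has essentially been captured in the observation that precedes the lemma, so the remaining work is bookkeeping.

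For preprocessing, I would construct the range-counting structure of Theorem \ref{th:goswami} on the red set $R$ in $O(n^2)$ time and space, which supports $O(\log n)$-time counting queries for rectangles of arbitrary orientation. I would additionally sort $B$ once by $x$-coordinate and once by $y$-coordinate in $O(m\log m)$ time and $O(m)$ extra space. This matches the $O(n^2 + m\log m)$ time and $O(n^2+m)$ space bounds claimed.

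For a query $(p,q)$, I would rotate the frame so the line through $p$ and $q$ becomes the $x$-axis, split $C_{pq}$ into the subset $C'$ whose interiors lie above this axis and its symmetric counterpart below, and handle both halves identically. Using the sorted lists of $B$ in the rotated frame, I would locate $r_m$, form $B'$, and compute the staircases $B_l$ and $B_r$. The rectangle anchored by $\{p,q,r_m\}$ is constructed directly, and the remaining members of $C'$, each of which has its bounded sides supported by points of $B_a = B_l \cup B_r \cup \{r_m\}$ by the observation, are enumerated using Orlowski's output-sensitive staircase algorithm \cite{Orlowski90}. For every rectangle produced I would issue one $O(\log n)$ query on the Goswami structure to count its red points and keep the best seen so far.

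Summing costs, the per-query time is dominated by $|C'|$-many $O(\log n)$ range queries plus the linear-time Orlowski enumeration, giving $O(|C_{pq}|\log n)$ after both halves are combined. The main subtlety I anticipate is arranging the extraction of $r_m$, $B'$, $B_l$ and $B_r$ so that its cost is absorbed into $O(|C_{pq}|\log n)$ rather than growing with $m$; monotone scans on the sorted lists that halt as soon as the staircase is exhausted should suffice, since each point that enters $B_a$ anchors at least one rectangle subsequently emitted by Orlowski's algorithm, and the $\{p,q,r_m\}$-anchored rectangle covers the degenerate case.
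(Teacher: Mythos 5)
Your proposal matches the paper's argument essentially verbatim: the same $O(n^2)$ Goswami range-counting structure built on $R$, the same $O(m\log m)$ sorting of the blue points (understood in the frame where the line through $p,q$ is the $x$-axis), and the same per-pair procedure of locating $r_m$, building the staircases, enumerating $C'$ above and below the line via Orlowski's output-sensitive algorithm, and issuing one $O(\log n)$ counting query per rectangle. The scan-cost subtlety you flag at the end is real, but the paper does not address it either---it simply asserts the $O(|C'|\log n)$ bound given the two sorted lists---so your write-up is, if anything, slightly more explicit than the paper's own proof.
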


Note that this subroutine can be trivially used for finding a maximum red rectangle in $C$ by calling it for each $p\in B$ and $q\in R\cup B$, and by returning a maximum red rectangle among all such choices. 
But, as we need to compute the sorted lists in every rotated plane (defined by $p$ and $q$) we need $O(m^2(n+m)\log m)$ time in total just for sorting. 
Thus this trivial algorithm runs in $O(m^2(n+m)\log m+n^2 + |C|\log n)$ time. 
In the next subsection, we improve this time complexity by using a careful observation that rescues us from the burden of sorting in every rotated plane. 
\subsubsection{The Improved Algorithm}
The rectangles of $C$ are oriented at angles with respect to the $x$-axis in the range $[0,360)$. Moreover, there are $O(m(n+m))$ orientations (or angles) of these rectangles each defined by a point of $B$ and a point of $R\cup B$. Note that we need the sorted lists of blue points in each such orientation. We use a novel technique for maintaining the sorted lists using some crucial observations. We note that this problem itself might be of independent interest.


%
%

Consider two points $a,b \in B$ and an angle $\theta$. We want to find the ordering of these two points with respect to $x$ coordinates, in the plane oriented counterclockwise at the angle $\theta$. Let us denote this plane by $P_{\theta}$. Also let $B_x^{\theta}$ (resp. $B_y^{\theta}$) be the set of blue points in $P_{\theta}$ sorted in increasing order of $x$ (resp. $y$) coordinates. The following lemma explains how the relative ordering of two points gets changed with changes in plane orientation.

\begin{figure}[h]
\centering
\includegraphics[width=0.5\textwidth]{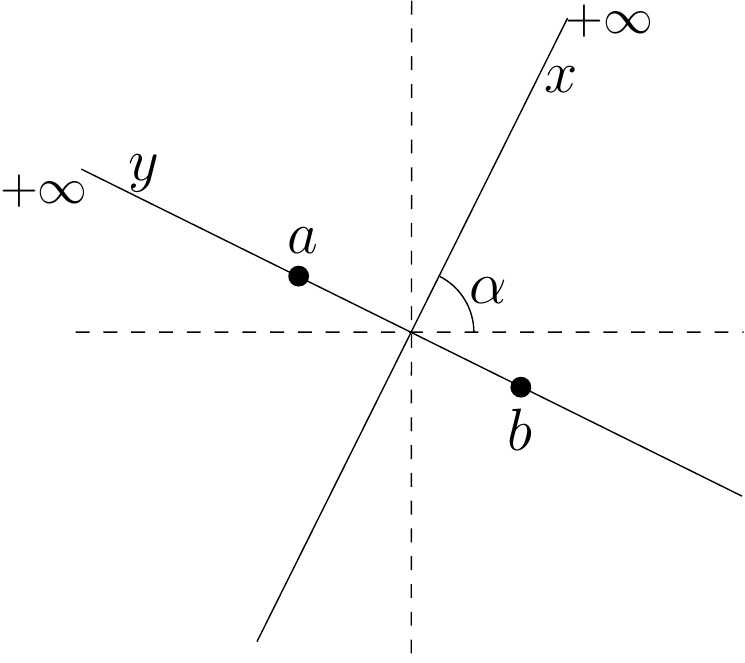}
\caption{The plane generated by rotation of axes by an angle $\alpha$ where $x_a=x_b$.}
\label{fig:rotation}
\end{figure}

\begin{restatable}{lemma}{xrotate}\label{lem:xrotation}
 For any two points $a,b \in B$, there exists an angle $\phi < 180$ such that $x_a = x_b$ in $P_{\phi}$ and $P_{\phi+180}$, and exactly one of the following is true,
 \begin{enumerate}[(i)]
  \item $x_a < x_b$ in $P_{\theta}$ for $\theta \in [0,\phi)\cup (\phi+180,360)$ and $x_a > x_b$ in $P_{\theta}$ for $\theta \in (\phi,\phi+180)$
  \item $x_a > x_b$ in $P_{\theta}$ for $\theta \in [0,\phi)\cup (\phi+180,360)$ and $x_a < x_b$ in $P_{\theta}$ for $\theta \in (\phi,\phi+180)$
 \end{enumerate}
\end{restatable}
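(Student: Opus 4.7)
The plan is to analyze the signed difference between the $x$-coordinates of $a$ and $b$ in the rotated plane $P_\theta$ as a single sinusoidal function of $\theta$, and then read off the claim from its zero set.

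First I would write down, explicitly, the $x$-coordinate of a point $p = (x_p, y_p)$ in $P_\theta$: rotating the coordinate axes counterclockwise by $\theta$ turns the $x$-coordinate into $x_p \cos\theta + y_p \sin\theta$. Applied to the two points $a$ and $b$, I define
\[
f(\theta) \;=\; (x_a - x_b)\cos\theta + (y_a - y_b)\sin\theta,
\]
so that $x_a = x_b$ in $P_\theta$ iff $f(\theta) = 0$, while $x_a < x_b$ (resp.\ $x_a > x_b$) in $P_\theta$ iff $f(\theta) < 0$ (resp.\ $f(\theta) > 0$).

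Next I would rewrite $f$ in amplitude-phase form: letting $A = x_a - x_b$, $B = y_a - y_b$, and $r = \sqrt{A^2 + B^2}$, there exists some angle $\psi$ with $f(\theta) = r \cos(\theta - \psi)$. Since $a \neq b$, we have $r > 0$, so $f$ is not identically zero; its zeros in $[0, 360)$ are precisely the two angles $\psi + 90$ and $\psi + 270$ (mod $360$), which differ by exactly $180$. Let $\phi$ be the smaller of these two, so $\phi < 180$ and the two zeros are $\phi$ and $\phi + 180$, as required by the statement. This already yields the first part: $x_a = x_b$ in $P_\phi$ and in $P_{\phi+180}$.

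For the alternation claim I would use continuity and the fact that the only zeros of $f$ on $[0, 360)$ are $\phi$ and $\phi+180$: on each of the open intervals $[0,\phi)$, $(\phi, \phi+180)$, and $(\phi+180, 360)$, the sign of $f$ is constant, and because $\cos(\theta - \psi)$ changes sign at each of its zeros, the sign on $(\phi, \phi+180)$ is opposite to the sign on $[0, \phi) \cup (\phi+180, 360)$. If the constant sign of $f$ on $[0,\phi)$ is negative (i.e.\ $x_a < x_b$ in $P_0$, equivalently $x_a < x_b$ in the original plane), conclusion (i) holds; if it is positive, conclusion (ii) holds. There is no main obstacle here — the whole argument is a one-line trigonometric identity combined with a sign-change observation — but the only subtlety worth being careful about is ruling out the degenerate case $r = 0$, which is handled by the general-position assumption since then $a$ and $b$ would coincide.
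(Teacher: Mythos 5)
Your proof is correct, but it takes a different route from the paper. You write the rotated $x$-coordinate explicitly as $x_p\cos\theta + y_p\sin\theta$, reduce the comparison to the sign of the single sinusoid $f(\theta) = (x_a-x_b)\cos\theta + (y_a-y_b)\sin\theta = r\cos(\theta-\psi)$, and read off both the two antipodal zeros and the sign alternation from the amplitude--phase form. The paper instead argues synthetically: it translates the midpoint of $\overline{ab}$ to the origin, rotates the axes continuously, and tracks the (at most two) moments when the $y$-axis passes through both points, arguing by a case analysis on which side of the $y$-axis each point lies that the order flips exactly at those two angles, which are $180^{\circ}$ apart. Your analytic argument buys brevity and rigor — the fact that the zero set is exactly $\{\phi,\phi+180\}$ and that the sign alternates is immediate from properties of $\cos$, with no case distinctions — while the paper's argument stays in the geometric language of the rotating plane used elsewhere in Section 2 and needs no coordinate formula for the rotation. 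One small point to tidy in your write-up: when $\phi = 0$ (i.e.\ $a,b$ already share an $x$-coordinate in $P_0$), the interval $[0,\phi)$ is empty and $f(0)=0$, so you cannot determine which of (i) or (ii) holds from ``the sign of $f$ on $[0,\phi)$'' or from the order in $P_0$; instead pick the sign on $(\phi,\phi+180)$ (or just after $\phi+180$) to decide between the two alternatives. This is exactly the degenerate starting configuration the paper handles separately by setting $\alpha = 0$, and it does not affect the validity of your approach.
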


\begin{proof}
 Denote the line segment connecting $a$ and $b$ (w.r.t $P_0$) by $l$. We translate $a,b$ in a way so that the midpoint of $l$ is now at the origin. Note that translation does not change the ordering of the points. Let $f_a$ (resp. $f_b$) be the function such that $f_a(\theta)$ (resp. $f_b(\theta)$) is the $x$-coordinate value of $a$ (resp. $b$) in $P_{\theta}$. Now consider a continuous rotation process of the axes (or the plane) in counterclockwise direction with respect to the origin, keeping the points $a,b$ fixed. If $f_a(0)$ is equal to $f_b(0)$, then $a$ and $b$ are on the $y$-axis in $P_0$. Let $\alpha=0$ in this case. Otherwise, $a$ and $b$ are on the opposite sides of the $y$-axis. In this case, as we rotate the axes the value of $|f_a(\theta)-f_b(\theta)|$ becomes zero at some angle $\theta=\alpha$ when both $a$ and $b$ lie on the $y$-axis (see Figure \ref{fig:rotation}). Note that if $f_a(0) < f_b(0)$, $f_a(\theta) < f_b(\theta)$ for all $\theta\in [0,\alpha)$, as $a$ and $b$ do not change their sides with respect to the $y$-axis during this rotation process. Similarly, if $f_a(0) > f_b(0)$, $f_a(\theta) > f_b(\theta)$ for all $\theta\in [0,\alpha)$. Now in both of the cases ($f_a(0)=f_b(0)$ or $f_a(0)\neq f_b(0)$), after a slight rotation $a$ and $b$ will be on the opposite sides of the $y$-axis. As we rotate further the value of $|f_a(\theta)-f_b(\theta)|$ again becomes zero at some angle $\theta=\beta$ when again both $a$ and $b$ lie on the $y$-axis. In the first case, either $f_a(\theta) > f_b(\theta)$ for all $\theta\in (\alpha,\beta)$, or $f_a(\theta) < f_b(\theta)$ for all $\theta\in (\alpha,\beta)$. In the second case, if $a$ was on the left (resp. right) of the $y$-axis in $P_{\theta}$ for $\theta\in [0,\alpha)$, now $a$ is on the right (resp. left) side of the $y$-axis. Thus if $f_a(\theta) < f_b(\theta)$ for  $\theta\in [0,\alpha)$, $f_a(\theta) > f_b(\theta)$ for all $\theta\in (\alpha,\beta)$. Similarly, if $f_a(\theta) > f_b(\theta)$ for  $\theta\in [0,\alpha)$, $f_a(\theta) < f_b(\theta)$ for all $\theta\in (\alpha,\beta)$. As we rotate past $\beta$ in both of the cases, $a$ and $b$ will again be on the opposite sides of the $y$-axis. If $f_a(\theta) < f_b(\theta)$ for $\theta\in (\alpha,\beta)$, $f_a(\theta) > f_b(\theta)$ for all $\theta\in (\beta,360)$. Similarly, if $f_a(\theta) > f_b(\theta)$ for $\theta\in (\alpha,\beta)$, $f_a(\theta) < f_b(\theta)$ for all $\theta\in (\beta,360)$. 
 
 We let $\phi = \alpha$. Note that $f_a(\theta)$ can be equal to $f_b(\theta)$ for $\theta \in [0,360)$ only when $y$-axis is aligned with the line through $a$ and $b$. Also note that during a complete 360$^{\circ}$ rotation of the axes this can happen exactly twice. Moreover, the difference between the corresponding angles should be 180, i.e $\beta = \phi+180$. This completes the proof of the lemma. 
\end{proof}

Similarly, we get the following lemma for ordering of two points with respect to $y$-coordinates. 

\begin{lemma}\label{lem:yrotation}
 For any two points $a,b \in B$, there exists an angle $\phi < 180$ such that $y_a = y_b$ in $P_{\phi}$ and $P_{\phi+180}$, and exactly one of the following is true,
 \begin{enumerate}[(i)]
  \item $y_a < y_b$ in $P_{\theta}$ for $\theta \in [0,\phi)\cup (\phi+180,360)$ and $y_a > y_b$ in $P_{\theta}$, for $\theta \in (\phi,\phi+180)$
  \item $y_a > y_b$ in $P_{\theta}$ for $\theta \in [0,\phi)\cup (\phi+180,360)$ and $y_a < y_b$ in $P_{\theta}$, for $\theta \in (\phi,\phi+180)$
 \end{enumerate}
\end{lemma}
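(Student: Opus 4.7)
The plan is to deduce Lemma \ref{lem:yrotation} from Lemma \ref{lem:xrotation} by a $90^\circ$ shift of the rotation parameter. The key observation, obtained from a direct computation with the standard rotation formulas, is that the $y$-coordinate of any point in the plane $P_\theta$ coincides with its $x$-coordinate in the plane $P_{\theta+90}$, where the angle is taken modulo $360$. Consequently, the relative $y$-ordering of $a$ and $b$ in $P_\theta$ is exactly the relative $x$-ordering of $a$ and $b$ in $P_{\theta+90}$.

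With this in hand, I would apply Lemma \ref{lem:xrotation} to the pair $a,b$ to obtain an angle $\phi' < 180$ such that $x_a = x_b$ in $P_{\phi'}$ and $P_{\phi'+180}$, with the stated alternation of the $x$-ordering on the two complementary open arcs. I would then define $\phi := \phi' - 90 \pmod{180}$ so that $\phi \in [0,180)$, and verify that $y_a = y_b$ holds precisely in $P_\phi$ and $P_{\phi+180}$, and that the two cases of the lemma follow by pulling back the intervals $[0,\phi') \cup (\phi'+180, 360)$ and $(\phi', \phi'+180)$ through the $90^\circ$ shift.

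The main obstacle, though it is purely bookkeeping, is to handle the wraparound of the shifted intervals on the circle $[0,360)$: depending on whether $\phi' < 90$ or $\phi' \geq 90$, the shifted intervals may cross the endpoint $0$ or $360$. A short case analysis confirms that in either case the shifted intervals collapse to exactly the combinatorial form $[0,\phi) \cup (\phi+180, 360)$ and $(\phi, \phi+180)$ stated in the lemma. As an alternative proof strategy, one can simply replay the proof of Lemma \ref{lem:xrotation} verbatim with the roles of the $y$-axis and the $x$-axis exchanged, since the geometric argument (tracking when $a$ and $b$ cross the reference axis under continuous rotation, and showing they do so exactly twice per full turn) is completely symmetric in the two coordinates.
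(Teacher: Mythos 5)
Your proposal is correct, but your primary route differs from the paper's. The paper gives no separate argument for Lemma \ref{lem:yrotation}: it simply observes that the proof of Lemma \ref{lem:xrotation} goes through verbatim with the roles of the two axes exchanged (one tracks the angles at which $a$ and $b$ cross the $x$-axis instead of the $y$-axis), which is exactly your ``alternative strategy.'' Your main argument is instead a formal reduction: since the axes in $P_\theta$ are the original axes rotated counterclockwise by $\theta$, a point's $y$-coordinate in $P_\theta$ equals its $x$-coordinate in $P_{\theta+90}$ (indeed $-p_x\sin\theta+p_y\cos\theta = p_x\cos(\theta+90^{\circ})+p_y\sin(\theta+90^{\circ})$), so the $y$-ordering of $a,b$ at angle $\theta$ is the $x$-ordering at $\theta+90^{\circ}$, and Lemma \ref{lem:yrotation} follows from Lemma \ref{lem:xrotation} with $\phi=\phi'-90^{\circ} \pmod{180^{\circ}}$. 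This is a genuinely different and arguably more economical derivation, since it makes the $y$-statement a corollary rather than a repeated argument; its price is the bookkeeping you flag (the wraparound case analysis on $\phi'<90^{\circ}$ versus $\phi'\ge 90^{\circ}$, under which case (i) of one lemma may correspond to case (ii) of the other) and the need to fix the rotation convention, since rotating the points rather than the axes would make the shift $-90^{\circ}$ instead of $+90^{\circ}$; under the paper's convention your sign is right. The paper's ``by symmetry'' route avoids all of this but duplicates the geometric argument implicitly. Both are sound.
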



We refer to the angles $\phi$ and $\phi+180$ in Lemma \ref{lem:xrotation} and \ref{lem:yrotation} as \textit{critical angles} with respect to $a$ and $b$. Let $E_1$ (resp. $E_2$) be the set of all critical angles corresponding to pairs of blue points with respect to $x$ (resp. $y$) coordinates. An element of $E_1$ (resp. $E_2$) will be referred to as an event point of first (resp. second) type. Let $A$ be the set of angles (or orientations) defined by pairs of points $(p,q)$ such that $p \in B$ and $q\in R\cup B$. We refer to an element of $A$ as an event point of third type. We construct the set $E=E_1\cup E_2\cup A$ of event points. Also we add the angle $0$ to $E$ as an event point of both first and second type.

Lemma \ref{lem:xrotation} and \ref{lem:yrotation} hint the outline of an algorithm. We sort the list $E$ of event points in increasing order of their angles. We process the event points in this order. 
To start with we construct the set $B_x^0$ (resp. $B_y^0$) at the very first event point which is $0$. In case the algorithm encounters an event point of the first (resp. second) type it swaps the corresponding two points in the current sorted list $B_x^{\phi}$ (resp. $B_y^{\phi}$), where $\phi\in [0,360)$. At an event point of the third type it makes a call to the subroutine of Section \ref{subrtn}. Lastly, a rectangle that contains the maximum red points over all orientations is returned as the solution.

The correctness of this algorithm depends on the correctness of maintaining the sorted lists in all the orientations. As the sorted order with respect to the $x$ (resp. $y$) coordinates do not change in between two consecutive event points of the first (resp. second) type, it is sufficient to prove the following lemma.

\begin{restatable}{lemma}{sorted}\label{lem:sorted}
 At each event point $\phi$ of the first (resp. second) type, the correct sorted list $B_x^{\phi}$ (resp. $B_y^{\phi}$) is maintained. 
\end{restatable}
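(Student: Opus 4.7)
The plan is to prove the statement by induction on the sequence of event points in $E$, processed in increasing order of angle, handling the first and second types by an identical argument (so I will focus on the first type, i.e.\ the list $B_x^{\phi}$). Write the sorted event sequence as $0 = \phi_0 < \phi_1 < \cdots < \phi_k < 360$. The base case $\phi_0 = 0$ is immediate because the algorithm constructs $B_x^0$ by sorting from scratch. For the inductive step, I will maintain the invariant that right before processing event $\phi_i$, the stored list is equal to the true sorted list $B_x^{\phi_i^-}$ (the order valid just below $\phi_i$).

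The first ingredient is to argue that the true order is constant on each open interval $(\phi_{i-1}, \phi_i)$. Fix any two blue points $a,b$. By Lemma \ref{lem:xrotation}, the relative $x$-order of $a$ and $b$ can flip only at the critical angles $\phi_{ab}$ and $\phi_{ab}+180$, both of which lie in $E_1 \subseteq E$. Hence no pair changes its $x$-order strictly between consecutive events, and in particular the stored list remains correct throughout $(\phi_{i-1}, \phi_i)$ and up to $\phi_i^-$.

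The second ingredient concerns what happens at an event $\phi_i$ of the first type, say the critical angle of the pair $(a,b)$. I need two claims. First, by Lemma \ref{lem:xrotation} the relative $x$-order of $a$ and $b$ in $P_{\phi_i^+}$ is exactly the reverse of that in $P_{\phi_i^-}$. Second, at angle $\phi_i$ the points $a$ and $b$ are adjacent in the current list: if some third blue point $c$ lay between them in $x$-coordinate at $\phi_i^-$, then by continuity of $x$-coordinates under rotation it would have to satisfy $x_c(\phi_i) = x_a(\phi_i) = x_b(\phi_i)$, which forces $c$ to share the critical angle $\phi_i$ with both $a$ and $b$, i.e.\ $a,b,c$ are collinear along a line perpendicular to the $x$-axis of $P_{\phi_i}$. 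General position on the input rules this out. Consequently the algorithm's swap of the two adjacent entries $a$ and $b$ produces exactly $B_x^{\phi_i^+}$, preserving the invariant.

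The main obstacle is the adjacency argument in the previous paragraph, since the whole scheme would fail if a critical event required permuting more than two elements; the resolution is to push general position to the level where no three blue points are collinear in any direction, which is a standard strengthening of the stated general-position assumption and which (equivalently) ensures that the critical angles in $E_1$ are pairwise distinct except for the $180^{\circ}$ pairing guaranteed by Lemma \ref{lem:xrotation}. The argument for $B_y^{\phi}$ at events of the second type is verbatim the same using Lemma \ref{lem:yrotation}.
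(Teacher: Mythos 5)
Your proof is essentially correct, but it reaches the key adjacency claim by a genuinely different route than the paper. The paper also argues by induction and reduces everything to showing that the pair $(a,b)$ being swapped is consecutive in the current list; however, it proves consecutiveness by an explicit geometric construction: if some $c$ lay between $a$ and $b$, then $c$ sits in the strip between the vertical lines through $a$ and $b$ in $P_{\alpha}$, and the wedge bounded by the ray from $b$ through $a$ (direction $\beta+90^{\circ}$) and the vertical ray at $b$ (direction $\alpha+90^{\circ}$) forces the critical angle of $(b,c)$ (or $(a,c)$) to lie in $[\alpha,\beta)$, contradicting either the choice of $\beta$ as the next event of first type or, in the boundary case $\theta=\alpha$, the correctness of the list at $\alpha$. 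Your argument replaces this wedge analysis by a continuity/limit argument: if $c$ stayed strictly between $a$ and $b$ on the whole interval preceding $\phi_i$, then at $\phi_i$ all three rotated $x$-coordinates coincide, i.e.\ $a,b,c$ are collinear, which general position excludes. This is shorter and cleaner; note that the paper itself also invokes general position in exactly this collinear sense (its subcase ``$x_a=x_b$ implies $x_a=x_c$''), so your strengthening to ``no three blue points collinear'' is consistent with what the paper already uses, and the paper's wedge argument buys essentially nothing beyond it except a more explicit localization of the offending critical angle.

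One side remark in your write-up is wrong and should be repaired: ``no three blue points collinear'' is \emph{not} equivalent to (and does not imply) the critical angles in $E_1$ being pairwise distinct up to the $180^{\circ}$ pairing. Two \emph{disjoint} pairs of blue points whose connecting segments are parallel produce the same critical angle without any three points being collinear, so your strictly increasing event sequence $\phi_0<\phi_1<\cdots$ may in fact contain ties. This does not break the proof, but you must handle it honestly rather than assume it away: if two first-type events share the angle $\phi_i$, then by no-three-collinearity the two pairs are disjoint and, at $\phi_i$, lie on two distinct vertical lines, so by your own continuity argument each pair is adjacent in the list just before $\phi_i$ and the two swaps act on disjoint adjacent positions; hence they commute and processing them in either order yields the correct list for $\phi_i^{+}$. (The paper sidesteps distinctness differently, by explicitly treating the case where the critical angle of $(b,c)$ coincides with the previous event $\alpha$.) With that patch, your induction is complete.
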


\begin{proof}
 We prove this lemma for the event points of the first type. The proof for the event points of the second type is similar. We use induction argument on the event points. In the base case for $\phi = 0$ we correctly compute the set $B_x^{0}$. Now assume that for $\alpha \in E_1$ and any $\phi \leq \alpha$ such that $\phi \in E_1$, the correct sorted list $B_x^{\phi}$ is maintained. Let $\beta$ be the successor of $\alpha$ in $E_1$. 
 Let $a$ and $b$ be the two points corresponding to $\beta$. By Lemma \ref{lem:xrotation}, the ordering of $x_a$ and $x_b$ should be different in $P_{\psi}$ and $P_{\theta}$ for $\psi \in (\alpha,\beta)$ and $\theta \in (\beta,\beta+180)$. Thus our algorithm rightly swaps $a$ and $b$ in $B_x^{\alpha}$ to get $B_x^{\beta}$. The location of any other point remains same. Now if there is any point $c$ in $B_x^{\alpha}$ in between $a$ and $b$, then the order of $a$ and $c$ (resp. $b$ and $c$) also gets changed and we end up computing a wrong list. Thus it is sufficient to prove the following claim.

\begin{figure}[h]
\centering
\includegraphics[width=0.6\textwidth]{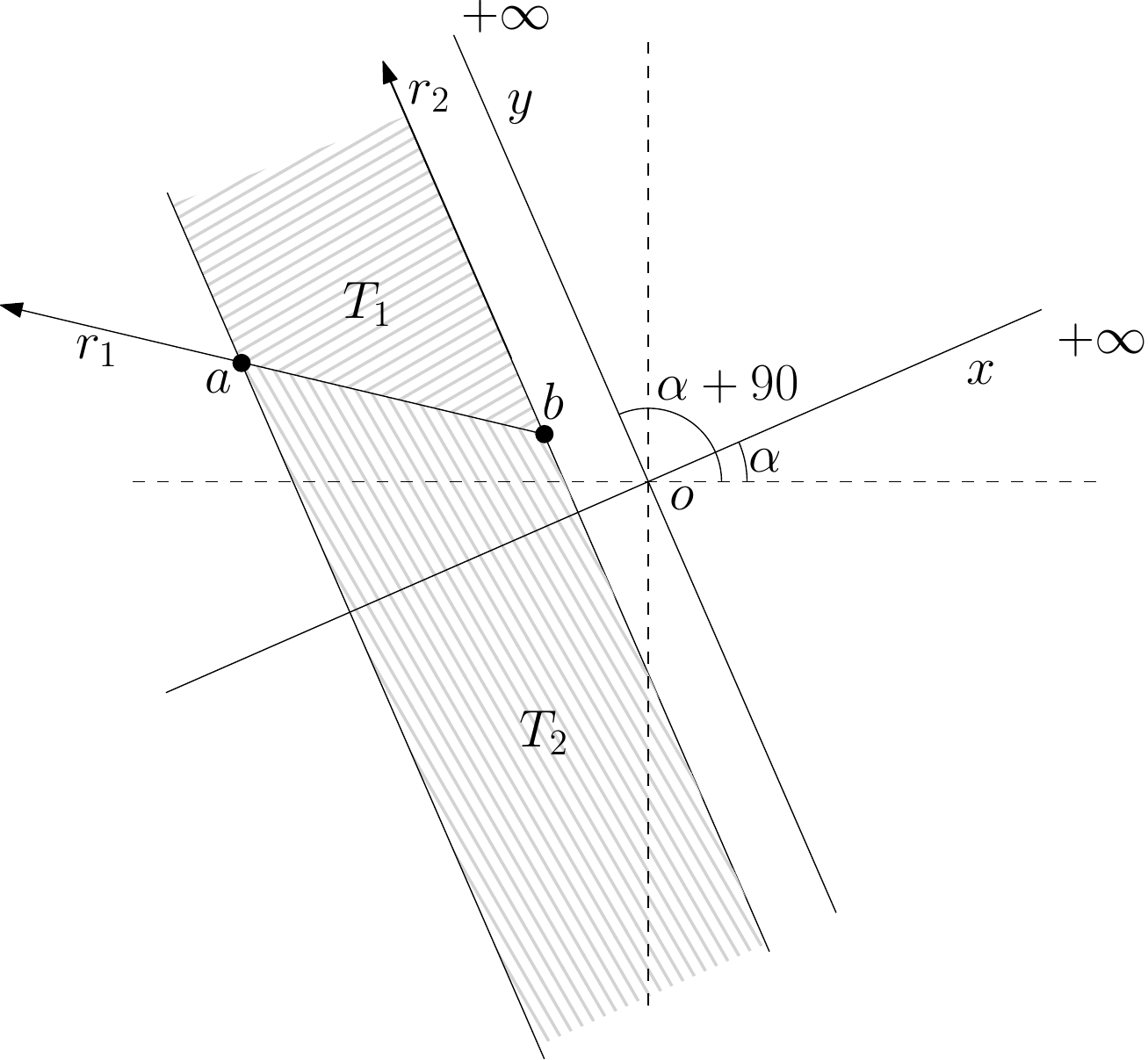}
\caption{The sets $T_1$ and $T_2$. $T_1$ is contained in the wedge defined by $r_1$ and $r_2$.}
\label{fig:claim1}
\end{figure}

\begin{Claim}
$a$ and $b$ must be consecutive in $B_x^{\alpha}$.
\end{Claim}
\begin{proof}
We consider the case where $\beta < 180$. The other case is symmetric. Now consider the plane $P_{\alpha}$. Without loss of generality assume that $x_a \leq x_b$ in this plane. Suppose there is a point $c$ in $B_x^{\alpha}$ in between $a$ and $b$. If $x_a=x_b$, then $x_a=x_c$. But this violates the general position assumption. Thus we consider the case where $x_a < x_b$. Let $l_a$ (resp. $l_b$) be the line passing through $a$ (resp. $b$) which is parallel to the $y$-axis. Let $T$ be the strip between the two lines $l_a$ and $l_b$, i.e $T$ contains all the points $p$ such that $x_a \leq x_p\leq x_b$. Then $c$ must be contained in $T$. Note that the line segment that connects $a$ and $b$ divides $T$ into two sets $T_1$ and $T_2$, above and below $\overline{ab}$ respectively (see Figure \ref{fig:claim1}). Suppose $c$ is in $T_1$. Consider two rays $r_1$ and $r_2$ both originated at $b$ that pass through the point $a$ and $(x_b,y_b+1)$, respectively. Note that $r_1$ and $r_2$ are oriented at angles $\beta+90^{\circ}$ and  $\alpha+90^{\circ}$, respectively. Also the wedge defined by $r_1$ and $r_2$ with angle less than 180$^{\circ}$ contains $T_1$ (see Figure \ref{fig:claim1}). Thus as $c$ is in $T_1$ there is a critical angle $\theta$ with respect to $b$ and $c$ such that $\alpha \leq \theta < \beta$. If $\alpha < \theta < \beta$, we get a contradiction to the assumption that $\beta$ is the next event point in $E_1$. If $\theta$ is equal to $\alpha$, $x_c=x_b$. As $c\in T_1$, $y_c > y_b$. Thus in any plane $P_{\theta}$ with $\theta \in (\alpha,\beta)$, $x_b < x_c$. Hence if $c$ appears before $b$ in $B_x^{\alpha}$, we get a contradiction to the assumption that $B_x^{\alpha}$ is the correct sorted list. In case $c$ is in $T_2$, one can get similar contradiction. Hence the claim follows.
\end{proof}
\end{proof}

Now consider the time complexity of our algorithm. Sorting of $O(m^2+mn)$ event points takes $O((m^2+mn)\log (m+n))$ time. Handling of any event point of first (resp. second) type takes constant time. By Lemma \ref{lem:sbrtn}, handling of all the event points of third type takes in total $O(|C|\log n)$ time. As $|C|=\Omega(m^2+mn)$ we get the following theorem. 

\begin{theorem}
 MPP can be solved in $O(|C|\log (m+n)+n^2)$ time with $O(n^2+m^2)$ space required.
\end{theorem}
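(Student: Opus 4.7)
The plan is to put together the ingredients developed earlier in the section: Lemma~\ref{candidate} tells us that the finite candidate set $C$ contains an optimum, Lemma~\ref{lem:sbrtn} lets us enumerate $C_{pq}$ quickly once sorted lists of $B$ in the plane oriented along $\overline{pq}$ are available, and Lemmas~\ref{lem:xrotation}--\ref{lem:sorted} let us maintain those sorted lists incrementally as we sweep the orientation angle. So the algorithm is: preprocess once, construct and sort an event list $E = E_1 \cup E_2 \cup A$, sweep through $E$, and invoke the subroutine of Section~\ref{subrtn} at every event of the third type.

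First I would bound the sizes. We have $|E_1|, |E_2| = O(m^2)$ (one critical pair of angles per pair of blue points) and $|A| = O(m(n+m))$, so $|E| = O(m^2 + mn)$. Sorting $E$ by angle therefore takes $O((m^2+mn)\log(m+n))$ time. I would then perform the one-time preprocessing: build the orthogonal-query structure of Theorem~\ref{th:goswami} on $R$ in $O(n^2)$ time and $O(n^2)$ space, and compute the initial lists $B_x^0$ and $B_y^0$ in $O(m\log m)$ time.

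Next I would analyze the sweep. At an event of type~1 (resp.\ type~2), by Lemma~\ref{lem:sorted} the update to $B_x^{\theta}$ (resp.\ $B_y^{\theta}$) consists of swapping two adjacent entries, which is $O(1)$; summed over $E_1 \cup E_2$ this contributes $O(m^2)$. At an event of type~3 with associated pair $(p,q)$, the current sorted lists are exactly those needed by the subroutine, so by Lemma~\ref{lem:sbrtn} we spend $O(|C_{pq}|\log n)$ time and output the best red rectangle in $C_{pq}$; summing over all type-3 events gives $\sum_{(p,q)} |C_{pq}| \log n = O(|C|\log n)$. Adding the sort and preprocessing costs and using $|C| = \Omega(m^2+mn)$ from Lemma~\ref{obs:cardofc} to absorb the $O((m^2+mn)\log(m+n))$ sort into $O(|C|\log(m+n))$, the total running time becomes $O(|C|\log(m+n) + n^2)$.

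For space I would account for the range-query structure ($O(n^2)$), the two sorted lists on $B$ ($O(m)$), and the event set ($O(m^2+mn)$, which is $O(n^2+m^2)$), giving $O(n^2+m^2)$ overall; note that the candidate rectangles themselves need not be stored since we maintain only the current best. Finally, correctness reduces to two facts already proved: Lemma~\ref{candidate} guarantees that some rectangle in $C$ is optimal, and Lemma~\ref{lem:sorted} guarantees that whenever the subroutine is invoked the sorted-list preconditions of Lemma~\ref{lem:sbrtn} hold. The only subtle point — and the one I expect to be the main obstacle to double-check — is the absorption step, namely verifying that the event-sorting cost really is dominated by $|C|\log(m+n)$; this is exactly where the $\Omega(m^2+mn)$ lower bound on $|C|$ is used, and without it the stated bound would not hold.
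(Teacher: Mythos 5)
Your proposal is correct and follows essentially the same argument as the paper: sort the $O(m^2+mn)$ event points, handle type-1/2 events in $O(1)$ via Lemma~\ref{lem:sorted}, invoke the subroutine of Lemma~\ref{lem:sbrtn} at type-3 events for a total of $O(|C|\log n)$, and absorb the sorting cost using $|C|=\Omega(m^2+mn)$, with the $O(n^2)$ term and $O(n^2+m^2)$ space coming from the query structure and the event/list storage.
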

\section{Maximum Coloring Problem}\label{sec:mcp}
Recall, that we are given the collection $Q=\{(a_1,b_1),\ldots, (a_n,b_n)\}$ of pairs of points and $P=\cup_{1\leq i\leq n} \{a_i,b_i\}$. A coloring of the points in $P$ is valid if for each pair of points in $Q$, exactly one point is colored by red and the other is colored by blue. For any such valid coloring $C$ and any half plane $h$, we denote the number of red points and the number of blue points in $h$ by $h_r(C)$ and $h_b(C)$ (see Figure \ref{fig:twocol}). For any valid coloring $C$, $\eta(C)=\max |h_r(C)|$, where the maximum is taken over all halfplanes $h$ for which $|h_b(C)|=0$.

\begin{figure*}
  \centering
  \includegraphics[width=50mm]
    {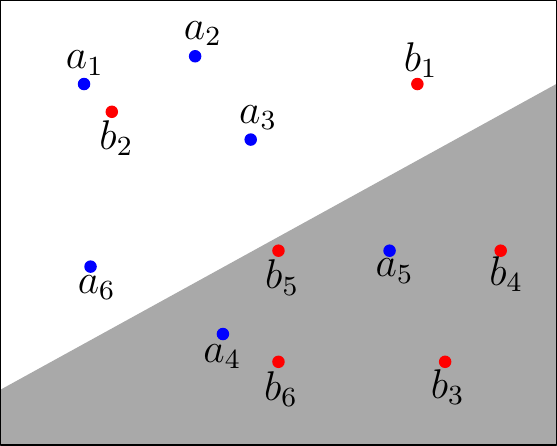}\\
  \caption{An example of a valid coloring. For the halfplane $h$ shown in grey, $h_r(C)=4$ and $h_b(C)=2$.}
  \label{fig:twocol}
\end{figure*}

In the Maximum Coloring Problem (MaxCol), the objective is to find a coloring $C^*$ which maximizes $\eta()$ over all valid colorings. Next we show that a maximum valid coloring can be found by solving the following problem.

\begin{myprob}
 Given a collection $Q=\{(a_1,b_1),\ldots, (a_n,b_n)\}$ of pair of points find a halfplane that contains at most one point from each pair and contains maximum number of points from $P$ where $P=\cup_{1\leq i\leq n}\{a_i,b_i\}$.\label{prob_monotone}
\end{myprob}

The following lemma shows that it is sufficient to solve Problem \ref{prob_monotone} to get a solution for MaxCol.
\begin{restatable}{lemma}{reduce}\label{lem:reduce}
 Suppose there exist an algorithm $\mathcal{A}$ which finds an optimal solution to Problem \ref{prob_monotone} in time $T(n)$. Then MaxCol can be solved in $T(n)+O(n)$ time.
\end{restatable}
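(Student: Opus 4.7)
The plan is to prove Lemma~\ref{lem:reduce} by exhibiting a linear-time reduction in both directions between the optimum values of MaxCol and Problem~\ref{prob_monotone}, and then converting an optimal halfplane for the latter into an optimal valid coloring for the former in $O(n)$ additional time.

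First I would fix notation: let $\mathrm{OPT}_{\text{MC}}$ denote the optimum of MaxCol and let $\mathrm{OPT}_{1}$ denote the optimum of Problem~\ref{prob_monotone} on the same instance $Q$. The core claim is that $\mathrm{OPT}_{\text{MC}} = \mathrm{OPT}_{1}$.

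For the inequality $\mathrm{OPT}_{\text{MC}} \le \mathrm{OPT}_{1}$, I would take any valid coloring $C$ achieving $\eta(C)=\mathrm{OPT}_{\text{MC}}$ together with a halfplane $h \in \mathcal{H}(C)$ that realizes this maximum. Since $C$ is valid, each pair $(a_i,b_i)$ contributes one red and one blue point; since $h$ contains no blue point of $C$, it can contain at most the red member of each pair. Thus $h$ is feasible for Problem~\ref{prob_monotone} and the number of points of $P$ inside $h$ equals $|h_r(C)|=\eta(C)$, giving $\mathrm{OPT}_{1} \ge \mathrm{OPT}_{\text{MC}}$. For the reverse inequality, I would take an optimal halfplane $h^{*}$ for Problem~\ref{prob_monotone} and construct a valid coloring $C^{*}$ as follows: for every pair $(a_i,b_i)$ with exactly one point in $h^{*}$, color that point red and its partner blue; for every pair with neither point in $h^{*}$, assign the two colors arbitrarily. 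By feasibility of $h^{*}$, no pair has both points in $h^{*}$, so this rule is well defined and yields a valid coloring. By construction $h^{*}$ contains no blue point of $C^{*}$, so $h^{*}\in\mathcal{H}(C^{*})$, and $|h^{*}_r(C^{*})|$ equals the number of points of $P$ in $h^{*}$, which is $\mathrm{OPT}_{1}$. Hence $\eta(C^{*})\ge \mathrm{OPT}_{1}$, completing the equality.

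For the running time, I would run $\mathcal{A}$ once to obtain $h^{*}$ in time $T(n)$, then implement the coloring rule above by a single pass over the $n$ pairs, testing for each point whether it lies in $h^{*}$ (a constant-time predicate). This pass, together with reading off $\eta(C^{*})$ as the count returned by $\mathcal{A}$, takes $O(n)$ time, yielding the claimed $T(n)+O(n)$ bound.

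The only delicate point, and the step I expect to need the most care in writing up, is the handling of pairs with neither endpoint in $h^{*}$: I must argue that an arbitrary assignment on such pairs cannot create a blue point inside $h^{*}$ nor reduce the red count there, so $h^{*}$ stays in $\mathcal{H}(C^{*})$ with the right value. Everything else is a direct matching of definitions.
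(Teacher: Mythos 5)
Your proposal is correct and essentially identical to the paper's proof: you construct $C^*$ from the optimal halfplane $h^*$ in exactly the same way (pairs met by $h^*$ get their $h^*$-point colored red, untouched pairs colored arbitrarily), and your two-inequality argument is just the paper's proof-by-contradiction unrolled, since any valid coloring $C$ with a blue-free halfplane $h$ yields a halfplane feasible for Problem~\ref{prob_monotone} containing $|h_r(C)|$ points. The ``delicate point'' you flag is immediate: a pair with neither endpoint in $h^*$ contributes no point of any color to $h^*$, so the arbitrary assignment cannot affect $h^*\in\mathcal{H}(C^*)$ or the red count.
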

\begin{proof}
Let the Algorithm $\mathcal{A}$ outputs the halfplane $h^*$ and $|h^*\cap P|=k$. We prove the claim that given $h^*$, we can find a valid coloring $C^*$, where $\eta(C^*)=k$, which maximizes $\eta()$ over all valid colorings.

We define $C^*$ as follows. Now for each pair $(a_i,b_i)$ such that $|h^*\cap\{a_i,b_i\}|=1$, we color the point in $h^*\cap\{a_i,b_i\}$ by red and the point in $h^*\setminus \{a_i,b_i\}$ by blue. For each pair $(a_i,b_i)$ such that $|h^*\cap\{a_i,b_i\}|=0$, we arbitrarily color one point by red and the other point by blue.
Given $h^*$ such a coloring can be found in $O(n)$ time.

Clearly this coloring is a valid coloring. Next we show that this is maximum as well. Suppose $C^*$  is not maximum and there exists a valid coloring $C$ such that $\eta(C^*) < \eta(C)$. Let $h_C$ be the halfplane corresponding to $C$ that contains only red points and $|P\cap h_C|=\eta(C)$. But then $h_C$ is a half plane such that $h_C$ contains at most one point from each pair in $Q$ and $|P\cap h_C| > |P\cap h^*|$ which contradicts the optimality of $h^*$.  
\end{proof}

%

Instead of directly solving Problem \ref{prob_monotone}, we consider its dual. We use the standard point/line duality that preserves the above/below relationship between points and lines. Let $\mathbb{L}=\{(l_1,\overline{l_1}),\ldots,(l_n,\overline{l_n})\}$ be the collection of the corresponding $n$ pairs of lines in the dual space and let $L$ be the set of those $2n$ lines. Also let $A$ be the set of points $p$ such that for each pair in $\mathbb{L}$, at most one line of the pair lie below $p$. For any $p\in A$, let $\nu(p)$ denote the number of lines in $L$ that lie below $p$. Similarly, let $B$ be the set of points $p$ such that for each pair in $\mathbb{L}$, at most one line of the pair lie above $p$. For any $p\in B$, let $\nu(p)$ denote the number of lines in $L$ that lie above $p$. Using duality, we get the following problem. 
\begin{myprob}
  Given a collection $\mathbb{L}$ of $n$ pairs of lines in the plane, find a point in $A\cup B$ that  maximizes the function $\nu()$.\label{dualprob}
\end{myprob}

Here we describe how to find a point in $A$ that maximizes $\nu()$. The point in $B$ which maximizes $\nu()$ can be found similarly. More specifically we solve the following decision problem.

\begin{myprob}
  Given a collection $\mathbb{L}$ of $n$ pairs of lines in the plane and an integer $k$, does there exist a point $p$ in $A$ such that $\nu(p)=k$.\label{dualprobdecision}
\end{myprob}

If one can solve this decision version in time $T'(n)$, then Problem \ref{dualprob} can be solved in $O(T'(n)\log n)$ time by using a simple binary search on the values of $k$. Observe, that for an ``YES'' instance of Problem \ref{dualprobdecision}, the point $p$ must be on the $k$-level in the arrangement of those $2n$ lines (the $k$-level of an arrangement of a set of lines is the polygonal chain formed by the edges that have exactly $k$ other lines strictly below them). Thus it is sufficient to compute the $k$-level and decide if there is a point $p$ on it such that $p\in A$. Our approach is the following.

Let $\Gamma_k$ be the $k$-level. We traverse the vertices of $\Gamma_k$ in sorted order of their $x$-coordinates. Throughout the traversal we maintain a list $Arr[]$ of size $n$, where $Arr[j]$ denotes the number of lines from $\{l_j,\overline{l_j}\}$ that are currently below the $k$-level. Note that the value of $Arr[j]$ can be $0$, $1$ or $2$. We also maintain an integer $n_b$ which denotes the number of pairs of lines currently below $\Gamma_k$. In other words $n_b$ is the number of $2$'s in $Arr[]$. We update $Arr[]$ and $n_b$ at each vertex of $\Gamma_k$. If a line belonged to a pair $\{l_j,\overline{l_j}\}$ leaves $\Gamma_k$, we reduce the value of $Arr[j]$ by one. Moreover, if the value of $Arr[j]$ was $2$ before, we reduce $n_b$ by one. Similarly, if a line belonged to a pair $\{l_j,\overline{l_j}\}$ enters $\Gamma_k$, we increase the value of $Arr[j]$ by one. If $Arr[j]$ becomes $2$, we also increase $n_b$ by one. At any point during the traversal if $n_b$ becomes zero, we report yes. Otherwise we report no at the end. 

Chan \cite{Chan95} designed an algorithm that computes the $k$-level in an arrangement of $n$ lines in the plane in $O ( n \log b + b^{1+\epsilon})$ time. From the result of Dey \cite{Dey97} we know $b=O(n(k + 1)^\frac{1}{3})$. Thus $k$-level can be computed in $O(n \log n + nk^\frac{1}{3})$ time. Hence we have the following lemma.

\begin{lemma}
 Problem \ref{dualprobdecision} can be solved in $O ( n \log n + n^{1+\epsilon}k^{\frac{1+\epsilon}{3}})$ time.
\end{lemma}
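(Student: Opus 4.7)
The plan is to carry out exactly the sweep described in the paragraph preceding the lemma and to bound its running time by plugging Dey's bound on the complexity of a $k$-level into Chan's construction algorithm.

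First I would invoke Chan's algorithm \cite{Chan95} on the arrangement of the $2n$ lines in $L$ to obtain $\Gamma_k$, together with the left-to-right sorted sequence of its vertices; this takes $O(n\log b + b^{1+\epsilon})$ time, where $b$ denotes the combinatorial complexity of $\Gamma_k$. Before sweeping, I initialize $Arr[\cdot]$ and $n_b$ at the leftmost edge of $\Gamma_k$ in a single pass over $L$ (for each pair $\{l_j,\overline{l_j}\}$, count how many of its two lines lie below that edge); this costs $O(n)$ time.

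Then I sweep the vertices of $\Gamma_k$ from left to right. At each vertex exactly two lines cross: one leaves $\Gamma_k$ and one enters it, so the updates to $Arr[\cdot]$ touch at most two indices and the counter $n_b$ changes by at most one. After each update I test whether $n_b = 0$; if so I output \textbf{YES}, and if the sweep terminates without this ever happening I output \textbf{NO}. Correctness follows from the observation that every point $p$ on $\Gamma_k$ has exactly $k$ lines of $L$ strictly below it, and the invariant that on any edge of $\Gamma_k$ the value of $n_b$ counts the pairs in $\mathbb{L}$ both of whose lines lie below that edge; hence $n_b=0$ on an edge precisely when its interior points belong to $A$ with $\nu(\cdot)=k$. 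Conversely, any witness point with $\nu(\cdot)=k$ must lie on $\Gamma_k$, so restricting attention to $\Gamma_k$ loses nothing.

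For the running time I substitute Dey's bound \cite{Dey97} $b = O(n(k+1)^{1/3})$ into Chan's expression. This gives $\log b = O(\log n)$ and $b^{1+\epsilon} = O(n^{1+\epsilon} k^{(1+\epsilon)/3})$, while the sweep itself adds only $O(n+b)$, which is absorbed by the second term. Summing yields the claimed $O(n\log n + n^{1+\epsilon} k^{(1+\epsilon)/3})$. The only delicate point — and really the main thing to get right — is the correctness of the sweep's bookkeeping at initialization and at each crossing, so that $n_b$ genuinely tracks the number of pairs with both lines below the current edge; everything else is routine.
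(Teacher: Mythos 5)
Your proposal is correct and follows essentially the same route as the paper: compute $\Gamma_k$ with Chan's algorithm, sweep its vertices while maintaining $Arr[\cdot]$ and $n_b$, and plug Dey's bound $b=O(n(k+1)^{1/3})$ into $O(n\log b + b^{1+\epsilon})$ to get the stated time. The only difference is that you spell out the initialization and the invariant for $n_b$ (and the claim that a witness can be assumed to lie on the $k$-level), which the paper leaves implicit.
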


Hence by using a binary search on the values of $k$ we have the following result.

\begin{theorem}
 Problem \ref{prob_monotone} and Problem \ref{dualprob} can be solved in $O (n^{\frac{4}{3}+\epsilon}\log n)$ time.
\end{theorem}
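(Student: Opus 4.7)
The plan is to combine the decision procedure of the preceding lemma with a binary search on the parameter $k$. The key observation one must establish first is \emph{monotonicity}: if Problem \ref{dualprobdecision} answers YES for some value $k$, then it also answers YES for every $k' \in \{0,1,\dots,k\}$. I would prove this by a short geometric argument in the arrangement of the $2n$ lines. If $p \in A$ lies on the $k$-level, then starting from the cell containing $p$ and walking through adjacent cells in the direction that strictly decreases $\nu$ (for instance, moving $p$ downward so that lines currently below $p$ move above it one at a time), every intermediate cell has some value $\nu \in \{0,1,\dots,k\}$. Crucially, membership in $A$ is preserved along this walk, because removing a line from the ``below'' set can only relax the constraint that each pair contributes at most one line below $p$. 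Hence every $k' \leq k$ is realized by some point in $A$.

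With monotonicity in hand, a standard binary search over $k \in \{0,1,\dots,2n\}$ locates the maximum $k^*$ for which Problem \ref{dualprobdecision} answers YES, using $O(\log n)$ calls to the decision procedure. By the preceding lemma each call runs in $O(n\log n + n^{1+\epsilon} k^{(1+\epsilon)/3})$ time, and substituting $k = O(n)$ gives $O(n^{(4+4\epsilon)/3})$ per call. Absorbing the factor $4\epsilon/3$ into a redefined $\epsilon$, this becomes $O(n^{4/3+\epsilon})$, so the total running time for maximizing $\nu$ over $A$ is $O(n^{4/3+\epsilon}\log n)$.

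To finish, I would carry out the same computation symmetrically for the set $B$ (swapping the roles of ``below'' and ``above'' of a point with respect to the lines) and return the larger of the two optima; this only doubles the work and therefore preserves the asymptotic bound. This settles Problem \ref{dualprob}. By the point/line duality used in defining Problem \ref{dualprob}, the same bound transfers immediately to Problem \ref{prob_monotone}.

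The main obstacle, as I see it, is the monotonicity claim; the rest is routine bookkeeping on the complexity of the decision procedure and the binary search. I would expect the monotonicity argument to be the only place where care is needed, in particular to ensure that one can always descend in $\nu$ while remaining in $A$ and passing through \emph{every} integer level down to $0$, rather than skipping values when crossing a vertex of the arrangement where two lines meet. Under the general-position assumption already adopted in the paper, one crosses exactly one line at a time and this issue does not arise.
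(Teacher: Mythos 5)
Your proposal is correct and follows essentially the same route as the paper: binary search over $k$ using the decision procedure of the preceding lemma, substituting $k=O(n)$ into its $O(n\log n + n^{1+\epsilon}k^{\frac{1+\epsilon}{3}})$ running time, treating $B$ symmetrically, and transferring the bound to Problem \ref{prob_monotone} via the duality. Your explicit monotonicity argument (descending in $\nu$ while remaining in $A$) is a correctness detail that the paper leaves implicit in its ``simple binary search,'' and it is sound.
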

\paragraph{Acknowledgements.} We would like to thank an anonymous reviewer of
an earlier version of this paper for suggestions that has helped us improve the running time of the algorithm for MaxCol.

\bibliographystyle{plain}
\bibliography{bichrome}

\end{document}